\numberwithin{equation}{section}
\theoremstyle{plain}
\newtheorem{thm}{Theorem}[section]
\newtheorem{lem}[thm]{Lemma}
\theoremstyle{definition}
\newtheorem{defn}[thm]{Definition}
\newtheorem{exm}[thm]{Example}
\newtheorem*{defn2}{Ranking heuristic}
\newtheorem{rout}{Procedure}
\newcommand{\beq}{\begin{equation}}
\newcommand{\eeq}{\end{equation}}
\newcommand{\bex}{\begin{exm}}
\newcommand{\eex}{\end{exm}}
\newcommand{\mbu}{\mathbf{u}}
\newcommand{\mbx}{\mathbf{x}}
\newcommand{\mbE}{\mathbf{E}}
\newcommand{\mbF}{\mathbf{F}}
\newcommand{\mbI}{\mathbf{I}}
\newcommand{\mbJ}{\mathbf{J}}
\newcommand{\mbK}{\mathbf{K}}
\newcommand{\mbone}{\boldsymbol{1}}
\newcommand{\mcA}{\mathcal{A}}
\newcommand{\mcC}{\mathcal{C}}
\newcommand{\mcD}{\mathcal{D}}
\newcommand{\mcQ}{\mathcal{Q}}
\newcommand{\upd}{\,\mathrm{d}}
\newcommand{\ua}{u^\alpha}
\newcommand{\uaJ}{u^\alpha_{\mathbf{J}}}
\newcommand{\uaI}{u^\alpha_{\mathbf{I}}}
\newcommand{\uaIj}{u^\alpha_{\mathbf{I},j}}
\newcommand{\uaIk}{u^\alpha_{\mathbf{I},k}}
\newcommand{\uaIjk}{u^\alpha_{\mathbf{I},j+k}}
\newcommand{\Ea}{\mbE_{u^\alpha}}
\newcommand{\Ex}{\mbE_{u}^{x}}
\newcommand{\EaIx}{\mbE_{u^\alpha_\mathbf{I}}^{x}}
\newcommand{\EaIkx}{\mbE_{u^\alpha_{\mathbf{I},k}}^{x}}
\newcommand{\sigx}{\sigma_u^x}
\newcommand{\sigaIx}{\sigma_{u^\alpha_\mathbf{I}}^{x}}
\newcommand{\pix}{\pi_u^x}
\newcommand{\piaIx}{\pi_{u^\alpha_\mathbf{I}}^{x}}
\newcommand{\bdag}{\boldsymbol{\dag}}
\newcommand{\p}{\partial}
\title{Partial Euler operators and the efficient inversion of Div}
\author{P. E. Hydon\\
	School of Mathematics, Statistics and Actuarial Science,\\
	University of Kent, Canterbury CT2 7FS, UK}
\date{15 December 2022}
\begin{document}

\label{firstpage}
\maketitle

\begin{abstract}
The problem of inverting the total divergence operator is central to finding components of a given conservation law. This might not be taxing for a low-order conservation law of a scalar partial differential equation, but integrable systems have conservation laws of arbitrarily high order that must be found with the aid of computer algebra. Even low-order conservation laws of complex systems can be hard to find and invert. This paper describes a new, efficient approach to the inversion problem. Two main tools are developed: partial Euler operators and partial scalings. These lead to a line integral formula for the inversion of a total derivative and a procedure for inverting a given total divergence concisely. 
\end{abstract}

\section{Introduction}

Around 20 years ago, Stephen Anco and George Bluman \cite{AncBlu1,AncBlu2} introduced a comprehensive practical method for determining conservation laws of partial differential equations (PDEs) in Kovalevskaya form. The method is based on finding adjoint symmetries and applying Helmholtz conditions\footnote{The deep theoretical foundation for this approach is discussed in Olver \cite{Olver}. For a recent review of the method and its extension beyond equations in Kovalevskaya form, see Anco \cite{Anco}. In particular, the method may be used for equations in extended Kovalevskaya form (see Popovych \& Bihlo \cite{PB}).}. A key part of the calculation is the inversion of the total divergence operator $\mathrm{Div}$ to obtain the components of the conservation law. Usually, this can be done by using a homotopy operator, but the following three problems may occur with the standard homotopy formula (which is given by Olver in \cite{Olver}).
\begin{enumerate}
	\item The homotopy formula uses definite integrals. This works well if the divergence is a differential polynomial; by contrast, rational polynomials commonly have a singularity at one limit. Hickman \cite{Hickman} and Poole \& Hereman \cite{Poole10} suggest avoiding these by working in terms of indefinite integrals, an approach that we use throughout this paper. Alternatively, one can move the singularity by modifying the dependent variable (see Anco \& Bluman \cite{AncBlu2} and Poole \& Hereman \cite{Poole10}).
	\item Scaling is fundamental to the homotopy approach to inversion. For instance, varying the scaling parameter in the standard homotopy formula moves contributions from variables along a ray to the origin. However, scaling does not change rational polynomials that are homogeneous of degree zero, so the standard inversion process does not work for such terms. Deconinck \& Nivala \cite{DecNiv} discussed this problem in some detail (for one independent variable only) and suggested possible workarounds, but commented, `We are unaware of a homotopy method that algorithmically avoids all problems like the ones demonstrated ...'. Poole \& Hereman \cite{Poole10} proposed an approach that works well for problems with one independent variable, but noted the difficulties of extending this to multiple independent variables (in a way that can be programmed).
	\item The standard homotopy operator applies to a starshaped domain and integrates along rays to the origin, changing all Cartesian coordinates at once. This gives an inefficient inversion, in the sense that the number of terms is generally very much greater than necessary; the homotopy formula creates superfluous elements of $\mathrm{ker}(\mathrm{Div})$. For polynomial divergences, Poole \& Hereman \cite{Poole10} removed curls by parametrizing all terms given by the homotopy formula and optimizing the resulting linear system. This approach is very effective, because (except where there are cancellations), the homotopy formula tends to include every possible term that can appear in an inversion. However, it is unclear whether this approach can be generalized to non-polynomial divergences. Moreover, the removal of curl terms takes extra processing time and does not allow for the possibility that new terms might appear in the most concise inversion of the divergence. If inversion could be done with respect to one independent variable at a time, this might prevent the occurrence of superfluous curls from the outset.
\end{enumerate}

\bex
To illustrate the inefficiency of homotopy operators on starshaped domains, consider the following divergence in $\mathbb{R}^3$:
\[
\mcC(x,y,z)=2xy\cos z.
\]
The homotopy operator based on a starshaped domain gives $\mcC=\mathrm{div}(x\phi,y\phi,z\phi)$, where
\begin{align*}
\phi&=\int_{0}^1\lambda^2\mcC(\lambda x,\lambda y,\lambda z)\,\upd\lambda\\
&=2xy\left\{\left(z^{-1}-12z^{-3}+24z^{-5}\right)\sin z +\left(4z^{-2}-24z^{-4}\right)\cos z\right\}.
\end{align*}
By comparison, for a given divergence $\mcC(x,y,z)$ that has no singularities on the coordinate axes, using a homotopy formula that integrates one variable at a time gives $\mcC=\mathrm{div}(F,G,H)$, where
\[
F=\int_0^x\mcC(\lambda,y,z)\upd\lambda,\qquad G=\int_0^y\mcC(0,\lambda,z)\upd\lambda,\qquad H=\int_0^z\mcC(0,0,\lambda)\upd\lambda.
\]
This recovers the concise form $(F,G,H)=(x^2y\cos z,0,0)$. However, the use of the lower limit makes the formula over-elaborate (and unsuitable for treating singularities). Indefinite integration is far more straightforward:
\[
\mcC=\frac{\p F}{\p x}, \quad \text{where}\quad F=\int \mcC(x,y,z) \upd x=x^2y\cos z.
\]
So the homotopy formula for starshaped domains gives $14$ more terms than the simple form above; the superfluous terms amount to
\[
\mathrm{curl}\left(4xy^2\left\{\left(3z^{-2}-6z^{-4}\right)\sin z -\left(z^{-1}-6z^{-3}\right)\cos z\right\},\,\,x^2y\sin z,\,\, xy\phi/2\right).
\]
\eex

The current paper extends the efficient one-variable-at-a-time approach to total derivatives. Indefinite integration is used, as advocated by Hickman \cite{Hickman} for standard homotopy operators; it avoids the complications resulting from singularities. From the computational viewpoint, the biggest advantage of integration with respect to just one independent variable is that the major computer algebra systems have efficient procedures for computing antiderivatives.

The keys to inverting a total divergence one variable at a time are `partial Euler operators' (see Section \ref{pEul}). These enable the inversion of a total derivative $D_x$ to be written as an indefinite line integral. Section \ref{DivInversion} introduces a new iterative method for inverting a given total divergence; typically, this does not produce superfluous terms and very few iterations are needed. Furthermore, it can cope with components that are unchanged by the relevant scaling operator.

The methods in this paper are systematic, leading to procedures that are intended to be suitable for implementation in computer algebra systems.

\section{Standard differential and homotopy operators}\label{Basics}

Here is a brief summary of the standard operators that are relevant to total divergences; for further details, see Olver \cite{Olver}. The independent variables $\mathbf{x} = (x^1,\dots, x^p)$ are local Cartesian coordinates and the dependent variables $\mathbf{u}=(u^1,\dots,u^q)$ may be real- or complex-valued. The Einstein summation convention is used to explain the main ideas and state general results. In examples, commonly-used notation is used where this aids clarity. All functions are assumed to be locally smooth, to allow the key ideas to be presented simply. 

Derivatives of each $\ua$ are written as $\uaJ$, where $\mbJ=(j^1,\dots,j^p)$ is a multi-index; each $j^i$ denotes the number of derivatives with respect to $x^i$, so $\ua_{\mathbf{0}}= \ua$. The variables $x^i$ and $\uaJ$ can be regarded as jet space coordinates. The \textit{total derivative} with respect to $x^i$,
\[
D_i=\frac{\p}{\p x^i}+\ua_{\mbJ +\mbone_i}\,\frac{\p}{\p \uaJ}\,,\quad\text{where}\quad \mbJ +\mbone_i=(j^1,\dots,j^{i-1},j^i+1,j^{i+1},\dots, j^p),
\]
treats each $\uaJ$ as a function of $\mbx$. To keep the notation concise, write
\[
D_\mbJ=D_1^{j^1}D_2^{j^2}\cdots D_p^{j^p};
\]
note that $\uaJ=D_\mbJ(\ua)$. Let $[\mbu]$ represent $\mbu$ and finitely many of its derivatives; more generally, square brackets around an expression denote the expression and as many of its total derivatives as are needed.

A \textit{total divergence} is an expression of the form
\[
\mcC=\mathrm{Div}(\mbF):= D_iF^i(\mbx,[\mbu]).
\]
(If all $F^i$ depend on $\mbx$ only, $\mcC$ is an ordinary divergence.) A conservation law of a given system of partial differential equations (PDEs), $\mcA_\ell(\mbx,[\mbu])=0,\ \ell=1,\dots,L$, is a total divergence that is zero on all solutions of the system; each $F^i$ is a finite sum of terms. By using elementary algebraic operations (in particular, expanding logarithms of products and products of sums), the number of linearly independent terms may be maximized. When the number of linearly independent terms is maximal for each $i$, we call the result the \textit{fully-expanded} form of $\mbF$. 

When $p>1$, the $p$-tuple of components, $\mbF$, is determined by $\mcC$ up to a transformation of the form
\beq\label{trans}
F^i\longmapsto F^i+D_j\left\{f^{ij}(\mbx,[\mbu])\right\},\qquad\text{where}\  f^{ji}=-f^{ij}.
\eeq
(If $p=3$, such a transformation adds a total curl to $\mbF$.) The total number of terms in $\mbF$ is the sum of the number of terms in all of the fully-expanded components $F^i$. If this cannot be lowered by any transformation \eqref{trans}, we call $\mbF$ \textit{minimal}. Commonly, there is more than one minimal $\mbF$, any of which puts the inversion of $\mathrm{Div}$ in as concise a form as possible. If $p=1$, the sole component $F^1$ (also denoted $F$) is determined up to an arbitrary constant, so the number of non-constant terms is fixed.

The formal adjoint of a differential operator (with total derivatives), $\mcD$, is the unique differential operator $\mcD^{\bdag}$ such that
\[
f\,\mcD g - \left(\mcD^{\bdag} f\right)g
\]
is a total divergence for all functions $f(\mbx,[\mbu])$ and $g(\mbx,[\mbu])$. In particular,
\beq\label{modJ}
(D_\mbJ)^{\bdag}=(-D)_\mbJ:=(-1)^{|\mbJ|}D_\mbJ,\quad\text{where}\quad |\mbJ|=j^1+\cdots+j^p.
\eeq
Thus the (standard) Euler--Lagrange operator corresponding to variations in $\ua$ is
\[
\Ea=(-D)_\mbJ\frac{\p}{\p \uaJ}\,.
\]
Total divergences satisfy a useful identity: a function $\mcC(\mbx,[\mbu])$ is a total divergence if and only if
\beq\label{divcrit}
\Ea(\mcC)=0,\qquad \alpha=1,\dots,q.
\eeq

Given a Lagrangian function $L(\mbx,[\mbu])$, the Euler--Lagrange equations are $\Ea(L)=0$. Given a set of Euler--Lagrange equations that are polynomial in the variables $(\mbx,[\mbu])$, the function $\overline{L}$ given by the homotopy formula
\beq
\overline{L}(\mbx,[\mbu])=\int_{0}^1\ua\left\{\Ea(L)\right\}\!\big|_{[\mbu\mapsto\lambda\mbu]}\upd\lambda
\eeq
differs from $L$ by a total divergence. (The same applies to many, but not all, non-polynomial Euler--Lagrange equations.)

When $p=1$, the equation $P(x,[\mbu])=D_xF$ is invertible (at least, for polynomial $P$) by the following standard homotopy formula:
\beq\label{homDxF}
F(x,[\mbu])=\!\int_{0}^1\sum_{i=1}^\infty D_x^{i-1}\!\left(\ua\!\left\{\sum_{k\geq i}\binom{k}{i}(-D_x)^{k-i}\frac{\p P(x,[\mbu])}{\p (D_x^k\ua)}\right\}\Bigg|_{[\mbu\mapsto\lambda\mbu]}\right)\!\upd\lambda +\int_{0}^1\!xP(\lambda x,[0])\upd\lambda.
\eeq
The operator acting on $P$ in the braces above is the \textit{higher Euler operator} of order $i$ for $p=1$. When $p\geq 2$, the standard homotopy formula is similar, but somewhat more complex (see Olver \cite{Olver} for details); it is based on higher Euler operators and integration along rays in a totally starshaped domain. The following example illustrates that even for quite simple divergences, this formula commonly yields inversions with many superfluous terms.

\bex\label{BBM}
The Benjamin--Bona--Mahony (BBM) equation, $u_{t}\!-\!uu_x\!-\!u_{xxt}=0$, has a conservation law
\beq\label{BBMeq}
\mcC=D_xF+D_tG=(u^2\!+\!2u_{xt})(u_{t}\!-\!uu_x\!-\!u_{xxt}).
\eeq
The standard homotopy formula gives
\begin{align*}
F&=-\tfrac{1}{3}uu_{xxtt}+\tfrac{2}{3}u_{x}u_{xtt}-\tfrac{1}{3}u_{t}u_{xxt}-\tfrac{1}{3}u_{xt}^2-\tfrac{1}{2}uu_{tt}+\tfrac{1}{2}u_{t}^2-\tfrac{2}{3}u^2u_{xt}+\tfrac{2}{3}uu_xu_{t}-\tfrac{1}{4}u^4,\\
G&=\tfrac{1}{3}uu_{xxxt}-\tfrac{1}{3}u_{x}u_{xxt}-\tfrac{2}{3}u_{xx}u_{xt}+\tfrac{1}{2}uu_{xt}+\tfrac{1}{2}u_xu_{t}-\tfrac{1}{3}u^2u_{xx}-\tfrac{2}{3}uu_{x}^2+\tfrac{1}{3}u^3,
\end{align*}
a total of 17 terms. By contrast,  careful integration by inspection yields
\beq\label{BBMCL}
\mcC=D_x\left(u_t^2-u_{xt}^2-u^2u_{xt}-\tfrac{1}{4}u^4\right)+D_t\left(\tfrac{1}{3}u^3\right),
\eeq
which is minimal, having only five terms in the components.
\eex

The homotopy formulae above can be applied or adapted to some, but not all, classes of non-polynomial Lagrangians and divergences. %(see Section \ref{Homotopy}).

\section{Partial Euler operators and partial scalings}\label{pEul}

This section introduces some ideas and results that underpin integration with respect to one independent variable at a time.
The independent variable over which one integrates is distinguished; this is denoted by $x$. For instance, if $x=x^1$, replace the derivative index $\mbJ$ by $(\mbI,j)$, where $j=j^1$ and $\mbI=(j^2,\dots,j^{p})$. So the dependent variables and their derivatives are denoted
\[
\ua_{\mbI,\,j}=D_x^{\,j}\uaI,\qquad\text{where}\ \uaI=\ua_{\mbI,0}.
\]
In examples, however, we write each $\uaI$ more simply (as $u$, $v_y$, $u_{yt}$, and so on), using $\phantom{}_{,j}$ for $D_x^{\,j}$.

\subsection{Partial Euler operators}

The \textit{partial Euler operator} with respect to $x$ and $\uaI$ is obtained by varying each $\uaI$ independently, treating $x$ as the sole independent variable:
\beq
\EaIx=(-D_x)^j\frac{\p}{\p \uaIj}\,.
\eeq
Consequently, the standard Euler operator with respect to $\ua$ amounts to
\beq\label{Eulp}
\Ea=(-D)_\mbI\EaIx\,.
\eeq
Similarly, the partial Euler operator with respect to $x$ and $\uaIk$ is
\beq
\EaIkx=(-D_x)^j\frac{\p}{\p \uaIjk}\,.
\eeq
Note that
\beq\label{uppE}
\EaIkx=\frac{\p}{\p \uaIk}-D_x\mbE^x_{\ua_{\mbI,k+1}}.
\eeq
The following identities are easily verified; here, $f(\mbx,[\mbu])$ is an arbitrary function.
\begin{align}
	&\EaIx(D_x f)=0,\label{id1}\\
	&\EaIkx(D_x f)=\frac{\p f}{\p \ua_{\mbI,k-1}}\,,\qquad k\geq 1,\label{id2}\\
	&\EaIkx(D_i f)=D_i\left(\EaIkx(f)\right)+\mbE^x_{\ua_{\mbI-\mbone_i,k}}(f),\qquad x^i\neq x,\label{id3}
\end{align}
where the last term in \eqref{id3} is zero if $j^i=0$.

\subsection{Inversion of $D_x$}

The identities \eqref{id1}, \eqref{id2} and \eqref{id3} are central to the inversion of total divergences, including the following inversion of $P=D_xF$ as an indefinite line integral.

\begin{lem}
	If $P(\mbx,[\mbu])=D_xF$, then, up to an (irrelevant) arbitrary function of all independent variables other than $x$,
	\beq\label{line}
	F(\mbx,[\mbu])=\int \left(P-\sum_{k\geq 1}\uaIk\,\EaIkx(P)\right)\!\upd x+\sum_{k\geq 0}\mbE^x_{\ua_{\mbI,k+1}}(P)\upd\uaIk\,.
	\eeq
\end{lem}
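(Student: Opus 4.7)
The plan is to recognise that the right-hand side of \eqref{line} is the line integral of a closed 1-form on jet space, specifically the differential $\upd F$ of $F$ restricted to variations in $(x,\uaIk)$ (with the remaining independent variables $x^i$, $i\neq 1$, held fixed). More precisely, I would show that
\[
\omega \;=\; \left(P-\sum_{k\geq 1}\uaIk\,\EaIkx(P)\right)\upd x \;+\; \sum_{k\geq 0}\mbE^x_{\ua_{\mbI,k+1}}(P)\,\upd \uaIk
\]
coincides with $\upd F$ on these coordinates. Once this is established, integrating $\omega$ along any path from a chosen base point reproduces $F$ up to a constant of integration that can depend only on the untouched variables $x^i$ — exactly the ``irrelevant arbitrary function'' asserted by the lemma.

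First I would apply identity \eqref{id2} to $P=D_xF$, with $k$ replaced by $k+1\geq 1$, to obtain
\[
\mbE^x_{\ua_{\mbI,k+1}}(P) \;=\; \frac{\p F}{\p \uaIk}, \qquad k\geq 0.
\]
This matches the coefficient of $\upd\uaIk$ in $\omega$ with $\p F/\p \uaIk$, handling the whole jet-coordinate block of $\omega$ in one stroke.

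Next I would expand $P=D_xF$ by the chain rule,
\[
P \;=\; \frac{\p F}{\p x} + \sum_{k\geq 0}\ua_{\mbI,k+1}\,\frac{\p F}{\p \uaIk}
\]
(Einstein summation over $\alpha$ and $\mbI$), substitute the identification above, and reindex $k+1\mapsto k$ to convert the sum into $\sum_{k\geq 1}\uaIk\,\EaIkx(P)$. Rearranging gives
\[
\frac{\p F}{\p x} \;=\; P - \sum_{k\geq 1}\uaIk\,\EaIkx(P),
\]
which is precisely the coefficient of $\upd x$ in $\omega$. Hence $\omega=\upd F$, path-independence is automatic, and the line integral recovers $F$.

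The only genuine obstacle is bookkeeping: tracking the shift between $k$ and $k+1$, the different sum ranges ($k\geq 0$ vs $k\geq 1$), and the implicit Einstein summation over $\alpha$ and $\mbI$ across the entire jet tower. Identity \eqref{id1} is the $k=0$ boundary case that prevents a spurious $k=0$ contribution in the $\upd x$-coefficient, while identity \eqref{id3} is not needed here because no $D_i$ with $x^i\neq x$ appears in $\omega$. No closedness check is required beyond the direct identification $\omega=\upd F$.
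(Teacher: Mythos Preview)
Your proposal is correct and follows essentially the same route as the paper: use identity \eqref{id2} to identify $\mbE^x_{\ua_{\mbI,k+1}}(P)=\p F/\p \uaIk$, then expand $P=D_xF$ by the chain rule and rearrange to recover $\p F/\p x$, so that the line integral reproduces $F$. Your extra framing in terms of $\omega=\upd F$ is a nice conceptual gloss, though note that identity \eqref{id1} is not actually needed to justify the sum starting at $k\geq 1$---that range arises purely from the reindexing $k+1\mapsto k$.
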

\begin{proof}
	By the identity \eqref{id2},
	\[
	F(\mbx,[\mbu])=\int \frac{\p F}{\p x}\,\upd x+\sum_{k\geq 0}\mbE^x_{\ua_{\mbI,k+1}}(D_xF)\upd\uaIk\,.
	\]
	Moreover,
	\[
		P=\frac{\p F}{\p x}+\sum_{k\geq 1}\uaIk\,\frac{\p F}{\p \ua_{\mbI,k-1}}=\frac{\p F}{\p x}+\sum_{k\geq 1}\uaIk\,\EaIkx(D_x F)\,.
	\]
	Substituting $P$ for $D_xF$ completes the proof.
\end{proof}

\bex
Locally, away from its singularities, the function
\beq\label{awkDF}
P=\frac{u_{xx}v_y-u_xv_{xy}}{v_y^2}+\frac{uv_x-u_xv}{v(u+v)}+\frac{1}{x}
\eeq
belongs to $\mathrm{im}(D_x)$, but cannot be inverted using the standard homotopy formula. Substituting
\begin{align*}
&\mbE^x_{u_{,1}}(P)=\frac{\p P}{\p u_x}-D_x\frac{\p P}{\p u_{xx}}=-\frac{1}{u+v}\,,
&&\mbE^x_{u_{,2}}(P)=\frac{\p P}{\p u_{xx}}=\frac{1}{v_y}\,,\\
&\mbE^x_{v_{,1}}(P)=\frac{\p P}{\p v_{x}}=\frac{u}{v(u+v)}\,,
&&\mbE^x_{v_{y,1}}(P)=\frac{\p P}{\p v_{xy}}=-\frac{u_x}{v_y^2}\,,
\end{align*}
into \eqref{line} yields the inversion:
\beq\label{awkDFsol}
F=\int \frac{\upd x}{x} -\frac{\upd u}{u+v}\,+\frac{\upd u_x}{v_y} +\frac{u\upd v}{v(u+v)}-\frac{u_x\upd v_y}{v_y^2}\, =\, \ln\left|\frac{xv}{u+v}\right|+\frac{u_x}{v_y}\,.
\eeq
\eex

\subsection{Integration by parts}

From here on, we will restrict attention to total divergences $\mcC$ whose fully-expanded form has no terms that depend on $\mbx$ only. Such terms can be inverted easily by evaluating an indefinite integral, as explained in the Introduction. Henceforth, all indefinite integrals denote antiderivatives with the minimal number of terms in their fully-expanded form. Any arbitrary constants and functions that would increase the number of terms are set to zero. This restriction facilitates the search for minimal inversions.

The indefinite line integral formula \eqref{line} is closely related to integration by parts. To see this, we introduce a \textit{positive ranking} on the variables $u^\alpha_{\mbJ}\,$; this is a total order $\preceq$ that is subject to two conditions:
\[
(i)\,\ u^\alpha\prec u^\alpha_{\mbJ}\,,\quad\mbJ\neq\mathbf{0},\qquad (ii)\,\ u^\beta_{\mbI}\prec u^\alpha_{\mbJ}\Longrightarrow D_{\mbK }u^\beta_{\mbI}\prec D_{\mbK }u^\alpha_{\mbJ}\,.
\]
The \textit{leading part} of a differential function is the sum of terms in the function that depend on the highest-ranked $\uaJ$, and the \textit{rank} of the function is the rank of its leading part (see Rust \textit{et al.} \cite{Rust} for details and references). Let $f(\mbx,[\mbu])$ denote the leading part of the fully-expanded form of $F$ and let $\mathrm{U}_{,k}$ denote the highest-ranked $\uaIk$; then the highest-ranked part of $P=D_xF$ is $\mathrm{U}_{,k+1}\p f/\p \mathrm{U}_{,k}$. Then \eqref{line} includes the contribution
\[
\int \mbE^x_{\mathrm{U}_{,k+1}}(P)\upd \mathrm{U}_{,k}=\int \frac{\p f}{\p \mathrm{U}_{,k}}\,\upd \mathrm{U}_{,k}= f +\ \text{lower-ranked terms}.
\]
Integration by parts gives the same result. Subtracting $f$ from $F$ and iterating shows that evaluating the line integral \eqref{line} is equivalent to integrating by parts from the highest-ranked terms downwards.

Integration by parts is useful for splitting a differential expression $P(\mbx,[\mbu])$, with $P(\mbx,[0])=0$, into $D_xF$ and a remainder, $R$, whose $x$-derivatives are of the lowest-possible order. The splitting is achieved by the following procedure.
\begin{rout}\label{ibp} \textbf{Integration by parts}  
\begin{enumerate}
	\item[] 
\begin{enumerate}
	\item[]
\begin{enumerate}
	\item[\textsc{Step} 0.] Choose a positive ranking in which $\ua_{\mbI,0}\prec D_xu^\beta$ for all $\alpha, \mbI$ and $\beta$. (We call such rankings $x$-\textit{dominant}.) Initialize by setting $F:=0$ and $R:=0$.
	\item[\textsc{Step} 1.] Identify the highest-ranked $\uaIk$ in $P$; denote this by $\mathrm{U}_{,k}$. If $k=0$, add $P$ to $R$ and \textbf{stop}. Otherwise, determine the leading part, $g$, of $P$.
	\item[\textsc{Step} 2.] Determine the sum $h\mathrm{U}_{,k}$ of all terms in the fully-expanded form of $g$ that are of the form $\gamma\mathrm{U}_{,k}$, where $\gamma$ is ranked no higher than $\mathrm{U}_{,k-1}$, and let
	\[
	H=\int h\upd\mathrm{U}_{,k-1}\,.
	\]
	\item[\textsc{Step} 3.] Update $F, R$ and $P$, as follows:
	\[
	F:=F+H,\qquad R:=R+g-h\mathrm{U}_{,k},\qquad P:=P-g+h\mathrm{U}_{,k}-D_xH.
	\]
	If $P\neq 0$, return to \textsc{Step} 1. Otherwise output $F$ and $R$, then \textbf{stop}.
\end{enumerate}
\end{enumerate}
\end{enumerate}
\end{rout}

The reason for choosing an $x$-dominant ranking is to ensure that the derivative order with respect to $x$ outweighs all other ranking criteria. Consequently, the minimally-ranked remainder cannot contain $x$-derivatives of unnecessarily high order.

\bex\label{HDsimp}
To produce a concise inversion of a conservation law of the Harry Dym equation (see Example \ref{HD} below), it is necessary to split
\[
P=-\tfrac{8}{3}u^2u_{,4}-\tfrac{16}{3}uu_{,1}u_{,3}-4uu_{,2}^2+4u_{,1}^2u_{,2}-u^{-1}u_{,1}^4\,.
\]
%\[
%P=-8u_{,5}u_{,1}-10u_{,3}u_{,1}^3-10u_{,2}^2u_{,1}^2-u_{,1}^6\,.
%\]
Procedure \ref{ibp} gives the splitting
\[
P=D_x\!\left\{-\tfrac{8}{3}u^2u_{,3}+\tfrac{4}{3}u_{,1}^3\right\}-4uu_{,2}^2-u^{-1}u_{,1}^4\,.
\]
\eex

\bex The ranking criterion in \textsc{Step} 2 of Procedure \ref{ibp} ensures that there are no infinite loops. It is not enough that terms are linear in the highest-ranked $x$-derivative, as shown by the following splitting of
\[
P=\frac{v_{,3}}{u_{y}}+\frac{u_{,2}}{v_{y}}
\]
For the positive $x$-dominant ranking defined by $v\prec u\prec v_y$, Procedure \ref{ibp} yields
\[
P=D_x\!\left\{\frac{v_{,2}}{u_{y}}+\frac{u_{,1}}{v_{y}}\right\}+\frac{v_{,2}u_{y,1}}{u_{y}^2}+\frac{u_{,1}v_{y,1}}{v_{y}^2}\,.
\]
Both terms in the remainder are linear in their highest-ranked $x$-derivatives, which are $v_{,2}$ and $v_{y,1}$ respectively. However, further integration by parts would return $P$ to a form with a higher-ranked remainder.
\eex

\subsection{Partial scalings} 

To investigate partial Euler operators further, it is helpful to use a variant of the homotopy approach. The \textit{partial scaling} (by a positive real parameter, $\lambda$) of a function $f(\mbx,[\mbu])$ with respect to $x$ and $\uaI$ is the mapping
\[
\sigaIx:(f;\lambda)\mapsto f\big|_{\{\uaIj\mapsto \lambda \uaIj,\ j\geq 0\}}.
\]
Again, each $\uaI$ is treated as a distinct dependent variable. Note the identity
\beq\label{scalder}
	\sigaIx D_x= D_x \sigaIx\,.
\eeq

\begin{defn}
	The partial scaling $\sigaIx$ is a \emph{good scaling} for a given differential function $f(\mbx,[\mbu])$ if
	\beq\label{goods1}
	\sigaIx(f;\lambda)=\int \frac{\upd}{\upd \lambda}\!\left(\sigaIx(f;\lambda)\right)\upd \lambda,
	\eeq
	for all $\lambda$ in some neighbourhood of $1$.
\end{defn}

By definition, the partial scaling $\sigaIx$ fails to be a good scaling for $f$ if and only if there are terms that are independent of $\lambda$ in the fully-expanded form of $\sigaIx(f;\lambda)$. The simplest cause of this is that the fully-expanded form of $f$ has terms that are independent of $\uaI$ and its $x$-derivatives. However, this is not the only cause, as the following example illustrates. 

\bex\label{kerpi}
The scalings $\sigma^y_{u}$ and $\sigma^y_v$ are not good scalings for
\beq\label{awkDiv}
\mcC=u_{x}(2u+v_{y})-v_{x}(u_y+2v_{yy})+\frac{u_{x}}{u^2}+\frac{v_{yy}}{v_{y}}+\frac{2u_{y}}{u}\ln|u|,
\eeq
because (in fully-expanded form),
\begin{align}
\sigma^y_{u}(\mcC;\lambda)&=\lambda (2uu_{x}-v_{x}u_y)+\frac{u_{x}}{\lambda^2 u^2}+\frac{2u_{y}}{u}\ln(\lambda)+\left\{u_{x}v_{y}-2v_{x}v_{yy}+\frac{v_{yy}}{v_{y}}+\frac{2u_{y}}{u}\ln|u|\right\},\label{uxbad}\\
\sigma^y_{v}(\mcC;\lambda)&=\lambda(u_{x}v_{y}-2v_{x}v_{yy}) +\left\{2uu_{x}-v_{x}u_y+\frac{u_{x}}{u^2}+\frac{v_{yy}}{v_{y}}+\frac{2u_{y}}{u}\ln|u|\label{vxbad}\right\}.
\end{align}
The terms in braces are independent of $\lambda$; in \eqref{uxbad} (resp.\ \eqref{vxbad}), some of these depend on $u$ (resp. $v$) and/or its $y$-derivatives. Part of the scaled logarithmic term is independent of $\lambda$, though part survives differentiation. Note that $\sigma^y_{u}$ is a good scaling for the term $u_{x}/u^2$; the singularity at $u=0$ is not an obstacle.
\eex

\begin{lem}\label{goodslem}
The partial scaling $\sigaIx$ is a good scaling for $f(\mbx,[\mbu])$ if and only if
\beq\label{goods2}
f=\lim_{\lambda\rightarrow 1}\int \frac{\upd}{\upd \lambda}\!\left(\sigaIx(f;\lambda)\right)\upd \lambda.
\eeq
\end{lem}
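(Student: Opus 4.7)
The plan is to show both conditions are equivalent to the vanishing of the $\lambda$-independent part of $\sigaIx(f;\lambda)$ in its fully-expanded form. First I would decompose
\[
\sigaIx(f;\lambda) = A(\lambda) + B,
\]
where $B$ collects all terms in the fully-expanded form that are independent of $\lambda$ and $A(\lambda)$ collects the remaining, genuinely $\lambda$-dependent terms. By the remark preceding the lemma, $\sigaIx$ is a good scaling for $f$ precisely when $B=0$.

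Next, since $B$ is $\lambda$-independent, one has $\frac{\upd}{\upd\lambda}\sigaIx(f;\lambda)=\frac{\upd A}{\upd\lambda}$. Using the standing convention that indefinite integrals yield minimal-term antiderivatives (with no spurious constants or extra functions added), I would conclude
\[
\int \frac{\upd}{\upd\lambda}\sigaIx(f;\lambda)\,\upd\lambda = A(\lambda).
\]
Everything then reduces to reading off what each of the two conditions says about $B$.

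For the forward implication, if $\sigaIx$ is a good scaling then \eqref{goods1} gives $\sigaIx(f;\lambda)=A(\lambda)$ on a neighbourhood of $1$, hence $B=0$; taking $\lambda\to 1$ yields $f=\sigaIx(f;1)=A(1)=\lim_{\lambda\to 1}A(\lambda)$, which is \eqref{goods2}. For the converse, observe that $\sigaIx(f;1)=f$ identically, so $f=A(1)+B$; combined with the hypothesis $f=\lim_{\lambda\to 1}A(\lambda)=A(1)$, this forces $B=0$, and hence $\sigaIx(f;\lambda)=A(\lambda)=\int \frac{\upd}{\upd\lambda}\sigaIx(f;\lambda)\,\upd\lambda$ throughout the same neighbourhood, which is \eqref{goods1}.

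The main subtlety will be justifying the decomposition $\sigaIx(f;\lambda)=A(\lambda)+B$ cleanly and the continuity of $A$ at $\lambda=1$. The fully-expanded form convention is what makes this work: a logarithmic contribution such as $\ln|\lambda u|$ must first be split as $\ln\lambda+\ln|u|$ so that every resulting summand is manifestly either $\lambda$-dependent or $\lambda$-independent (as in Example~\ref{kerpi}). Once each term is of one pure type, continuity of $A$ at $\lambda=1$ is automatic under the standing smoothness assumption, and both implications close by direct substitution.
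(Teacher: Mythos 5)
Your proof is correct, and its converse direction takes a genuinely different route from the paper's. The forward implication is handled essentially identically in both (evaluate at $\lambda=1$ and invoke smoothness). For the converse, the paper never touches the fully-expanded form: it applies $\sigaIx(\,\cdot\,;\mu)$ to \eqref{goods2}, uses the composition law $\sigaIx\!\left(\sigaIx(f;\mu);\lambda\right)=\sigaIx(f;\lambda\mu)$ together with the substitution $\lambda\mapsto\lambda\mu$ inside the integral, and lands directly on \eqref{goods1}. You instead make the decomposition $\sigaIx(f;\lambda)=A(\lambda)+B$ the engine of the whole argument, showing that each of the two conditions is equivalent to $B=0$. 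This is legitimate: the paper itself asserts, in the paragraph immediately after the definition of a good scaling, that goodness fails exactly when $\lambda$-independent terms are present, and your appeal to the minimal-antiderivative convention to get $\int\frac{\upd}{\upd\lambda}A\,\upd\lambda=A$ is sound, since the general antiderivative differs from $A$ by a $\lambda$-independent function whose fully-expanded terms cannot cancel any term of $A$ and so can only increase the count. What your route buys is transparency: it exhibits the obstruction to goodness as precisely the zero-degree (i.e.\ $\lambda$-independent) part, which is the notion the paper needs later anyway when defining $\piaIx$ and poor scalings. What the paper's route buys is independence from the decomposition: it relies only on the algebraic semigroup property of scalings and so sidesteps the (mild) issue of whether every scaled term splits cleanly into $\lambda$-dependent and $\lambda$-independent pieces --- the logarithm case you flag. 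Both proofs share the same residual subtlety about values of $\uaIk$ at which the antiderivative is an indeterminate form; the paper disposes of it with the limit in \eqref{goods2}, and you with the continuity of $A$ at $\lambda=1$.
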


\begin{proof}
	If $\sigaIx$ is a good scaling, \eqref{goods2} is a consequence of $f=\sigaIx(f;1)$ and local smoothness. Conversely, suppose that \eqref{goods2} holds and let $\mu$ be a positive real parameter that is independent of $\lambda$. Then for $\mu$ sufficiently close to $1$,
	\[
		\sigaIx(f;\mu)=\lim_{\lambda\rightarrow 1}\int \frac{\upd}{\upd \lambda}\!\left(\sigaIx(f;\lambda\mu)\right)\upd \lambda=\lim_{\lambda\rightarrow \mu}\int \frac{\upd}{\upd \lambda}\!\left(\sigaIx(f;\lambda)\right)\upd \lambda=\int \frac{\upd}{\upd \mu}\!\left(\sigaIx(f;\mu)\right)\upd \mu.
	\]
	Therefore, $\sigaIx$ is a good scaling.
	
	The use of the limit in \eqref{goods2} is needed to deal with any values of $\uaIk$ for which the integral is an indeterminate form. For other values, simple substitution of $\lambda=1$ gives the limit.
\end{proof}
Given a partial scaling $\sigaIx$ and a differential function $f(\mbx,[\mbu])$, let
\beq\label{proj}
\piaIx(f)=\lim_{\lambda\rightarrow 1}\int \frac{\upd}{\upd \lambda}\!\left(\sigaIx(f;\lambda)\right)\upd \lambda.
\eeq
For $\mu$ sufficiently close to $1$ (using $\widehat{f}$ as shorthand for $\piaIx(f)$),
\[
\sigaIx(\widehat{f};\mu)=\!\int \frac{\upd}{\upd \mu}\!\left(\sigaIx(f;\mu)\right)\!\upd \mu=\!\int\frac{\upd}{\upd \mu}\!\left\{
\int \frac{\upd}{\upd \mu}\!\left(\sigaIx(f;\mu)\right)\upd \mu\right\}\!\upd \mu=\!\int \frac{\upd}{\upd \mu}\!\left(\sigaIx(\widehat{f};\mu)\right)\!\upd \mu;
\] 
the first equality comes from the proof of Lemma \ref{goodslem}. Therefore, $\sigaIx$ is a good scaling for $\piaIx(f)$.
Moreover, there are no terms in the fully-expanded form of the remainder, $f-\piaIx(f)$, for which $\sigaIx$ is a good scaling, because
\[
\frac{\upd}{\upd \mu}\!\left(\sigaIx(f;\mu)\right)-\frac{\upd}{\upd \mu}\!\left(\sigaIx(\widehat{f};\mu)\right)=0.
\]
So $\piaIx$ is the projection that maps a given function onto the component which has $\sigaIx$ as a good scaling.

\begin{defn}
	The partial scaling $\sigaIx$ is a \emph{poor scaling} for a given differential function $f(\mbx,[\mbu])$ if $f-\piaIx(f)$ depends on any $\uaIk\,$.
\end{defn}

For instance, both $\sigma^y_{u}$ and $\sigma^y_v$ are poor scalings for \eqref{awkDiv}, as explained in Example \ref{kerpi}. Section \ref{poors} addresses the problem of inverting divergences such as \eqref{awkDiv} that have poor scalings. First, we develop the inversion process for general divergences. The following results are fundamental.

\begin{thm}\label{fundEul} Let $f(\mbx,[\mbu])$ be a differential function.
	\begin{enumerate}
		\item If $f=D_x F$, then $f\in\mathrm{ker}(\EaIx)$ for all $\alpha$ and $\mbI$; moreover,
		\beq\label{invDF}
		\piaIx(F)=\lim_{\lambda\rightarrow 1}\int \sum_{j\geq 0} \uaIj\,\sigaIx\!\left(\mbE^x_{\ua_{\mbI,j+1}}(f)\,;\lambda\right)\upd \lambda.
		\eeq
		\item If $f\in\mathrm{ker}(\EaIx)$, then $\piaIx(f)\in\mathrm{im}(D_x)$.
		\item If $g=\EaIx(f)$, then, up to terms in $\mathrm{im}(D_x)$,
		\beq\label{invEul}
		\piaIx(f)=\lim_{\lambda\rightarrow 1}\int \uaI\,\sigaIx\!\left(g\,;\lambda\right)\upd \lambda.
		\eeq
	\end{enumerate}
\end{thm}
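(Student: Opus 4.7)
My plan is to deduce Part 1 directly from the definition \eqref{proj} and identities \eqref{id1}--\eqref{id2}, and to derive Parts 2 and 3 simultaneously from a single integration-by-parts identity combined with the commutation of $\sigaIx$ with $D_x$ and with $\p/\p\uaIj$. For Part 1, the inclusion $f = D_x F \in \ker(\EaIx)$ is an immediate consequence of \eqref{id1}. To produce the formula \eqref{invDF}, I would apply the chain rule inside the definition \eqref{proj} of $\piaIx(F)$, obtaining $\frac{\upd}{\upd\lambda}\sigaIx(F;\lambda) = \sum_{j\geq 0}\uaIj\,\sigaIx(\p F/\p\uaIj\,;\lambda)$, and then rewrite $\p F/\p\uaIj$ via identity \eqref{id2} at $k = j+1$ as $\mbE^x_{\ua_{\mbI,j+1}}(D_x F) = \mbE^x_{\ua_{\mbI,j+1}}(f)$. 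Substitution into \eqref{proj} then yields \eqref{invDF}.

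For Parts 2 and 3 the backbone is shared. I would first record two chain-rule commutations: for any differential function $h$, since $\sigaIx$ rescales each $\uaIj$ by $\lambda$, one has $\frac{\p}{\p\uaIj}\sigaIx(h;\lambda) = \lambda\,\sigaIx(\p h/\p\uaIj\,;\lambda)$, and hence, combined with \eqref{scalder}, $\EaIx(\sigaIx(h;\lambda)) = \lambda\,\sigaIx(\EaIx(h);\lambda)$. I would then invoke the standard one-variable integration-by-parts identity $\uaI\,\EaIx(h) = \sum_{j\geq 0}\uaIj\,\p h/\p\uaIj + D_x(T_h)$, which follows term by term from $(D_x^j \uaI)B - (-1)^j\uaI\,D_x^j B = D_x(\cdot)$ applied to $B = \p h/\p\uaIj$ for the finitely many $j$ on which $h$ depends. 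Setting $h = \sigaIx(f;\lambda)$, using the two commutations together with the observation that $\sum_{j\geq 0}\uaIj\,\p h/\p\uaIj = \lambda\,\frac{\upd}{\upd\lambda}\sigaIx(f;\lambda)$, and dividing by $\lambda$, gives $\frac{\upd}{\upd\lambda}\sigaIx(f;\lambda) = \uaI\,\sigaIx(g;\lambda) - D_x(T_\lambda/\lambda)$, where $g = \EaIx(f)$. Integrating in $\lambda$ (pulling $D_x$ through the $\lambda$-integral, valid since $\lambda$ is independent of $\mbx$) and sending $\lambda\to 1$ delivers Part 3; Part 2 is the specialization $g = 0$, in which $\piaIx(f)$ is then manifestly in $\mathrm{im}(D_x)$.

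The main obstacle is bookkeeping rather than conceptual: I must confirm that $D_x$ commutes with indefinite $\lambda$-integration (immediate, since $\lambda$ is independent of $\mbx$), that it commutes with $\lim_{\lambda\to 1}$ (granted by the local smoothness assumption of Section 2), and that the division by $\lambda$ causes no harm because the limit is taken at $\lambda = 1$, well away from $\lambda = 0$.
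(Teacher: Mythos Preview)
Your proposal is correct and follows essentially the same route as the paper: expand $\piaIx(f)$ via the chain rule to get $\sum_{j\geq 0}\uaIj\,\sigaIx(\p f/\p\uaIj;\lambda)$ (the paper's \eqref{exp1}), then integrate by parts using the commutation of $\sigaIx$ with $D_x$ to reach $\uaI\,\sigaIx(\EaIx(f);\lambda)+D_xh$ (the paper's \eqref{exp2}); Part~1 comes from \eqref{exp1} applied to $F$ together with \eqref{id1}--\eqref{id2}, and Parts~2 and~3 drop out of \eqref{exp2}. Your extra bookkeeping (the explicit commutation identities, the division by $\lambda$ near $\lambda=1$, and pulling $D_x$ through the $\lambda$-integral and limit) merely spells out what the paper compresses into the single phrase ``integrating by parts, using the identity \eqref{scalder}.''
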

\begin{proof} All three statements are proved by expanding $\piaIx(f)$:
	\begin{align}
		\piaIx(f)
		&=\lim_{\lambda\rightarrow 1}\int \sum_{j\geq 0}\uaIj\,\sigaIx\!\left(\frac{\p f}{\p \uaIj}\,;\lambda\right)\upd \lambda\label{exp1}\\
		&=\lim_{\lambda\rightarrow 1}\int \uaI\,\sigaIx\!\left(\EaIx(f)\,;\lambda\right)\upd \lambda+D_xh.\label{exp2}
	\end{align}
Here $h(\mbx,[\mbu])$ is obtained by integrating by parts, using the identity \eqref{scalder}.

If $f=D_x F$, the identity \eqref{id1} amounts to $f\in\mathrm{ker}(\EaIx)$. Replace $f$ by $F$ in \eqref{exp1} and use the identity \eqref{id2} to obtain \eqref{invDF}. Statements $2$ and $3$ come directly from \eqref{exp2}.
\end{proof}

Note that \eqref{invDF} is a homotopy formula for (at least partially) inverting $D_xF$, giving a third way to do this. The line integral formula \eqref{line} carries out the full inversion in one step, but may take longer to compute. 

\section{The inversion method for Div}\label{DivInversion}

This section introduces a procedure to invert $\mathrm{Div}$, with a ranking heuristic (informed by experience) that is intended to keep the calculation short and efficient. To motivate the procedure, it is helpful to examine a simple example.

\bex\label{HD}
Wolf \textit{et al.} \cite{WolfBM} introduced computer algebra algorithms that can handle general (non-polynomial) conservation laws, and used these to derive various rational conservation laws of the Harry Dym equation. In the (unique) $x$-dominant positive ranking, the equation is $\mcA=0$, with
\[
\mcA=u_t-u^3u_{,3}\,.
\]
The highest-order conservation law derived in Wolf \textit{et al.} \cite{WolfBM} is $\mcC=\mcQ\mcA$, where
\[
\mcQ=-8uu_{,4}-16u_{,1}u_{,3}-12u_{,2}^2+12u^{-1}u_{,1}^2u_{,2}-3u^{-2}u_{,1}^4\,.
\]
Note that $\sigma^x_{u}$ is a good scaling for $\mcC$. The first step in inverting $\mcC=D_xF+D_tG$ is to apply the partial Euler operator $\Ex$, to annihilate the term $D_xF$. There are only two independent variables, so the identity \eqref{Eulp} shortens the calculation to
\[
\Ex(\mcC)=D_t(\mbE^x_{u_t}(\mcC))=D_t(\mcQ).
\]
Applying \eqref{invEul}, then using Procedure A to integrate by parts (see Example \ref{HDsimp}) gives
\begin{align*}
	\mcQ&=\Ex\!\left\{-\tfrac{8}{3}u^2u_{,4}-\tfrac{16}{3}uu_{,1}u_{,3}-4uu_{,2}^2+4u_{,1}^2u_{,2}-u^{-1}u_{,1}^4\right\}\\
	&=\Ex\!\left\{-4uu_{,2}^2-u^{-1}u_{,1}^4\right\}.
\end{align*}
Therefore
\[
\mcC=D_t\left(-4uu_{,2}^2-u^{-1}u_{,1}^4\right)+\widetilde{\mcC},
\]
where $\Ex(\widetilde{\mcC})=0$. As $\sigma^x_{u}$ is a good scaling for
\[
\widetilde{\mcC}=\mcQ\mcA+8uu_{,2}u_{t,2}+4u^{-1}u_{,1}^3u_{t,1}+\{4u_{,2}^2-u^{-2}u_{,1}^4\}u_t\,,
\]
the second part of Theorem \ref{fundEul} states that $\widetilde{\mcC}\in \mathrm{im}(D_x)$; consequently,
\[
G=-4uu_{,2}^2-u^{-1}u_{,1}^4\,.
\]
Either the line integral \eqref{line} or Procedure \ref{ibp} completes the inversion, giving $\widetilde{\mcC}=D_xF$, where
\[
F=8uu_{,2}u_{t,1}-\{8uu_{,3}+8u_{,1}u_{,2}-4u^{-1}u_{,1}^3\}u_t+4u^4u_{,3}^2+4u^3u_{,2}^3-6u^2u_{,1}^2u_{,2}^2+3uu_{,1}^4u_{,2}-\tfrac{1}{2}u_{,1}^6.
\]
The fully-expanded form of $(F,G)$ is minimal, having $11$ terms rather than the $12$ terms in Wolf \textit{et al.} \cite{WolfBM}. (Note: there is an equivalent conservation law, not in the form $\mcQ\mcA$, that has only $10$ terms.)
\eex

\subsection{A single iteration}\label{iteration}

The basic method for inverting a given total divergence one independent variable at a time works similarly to the example above. Suppose that after $n$ iterations the inversion process has yielded components $F^i_n$ and that an expression of the form $\mcC=D_if_n^i$ remains to be inverted. For the next iteration, let $\Ex$ be the partial Euler operator that is applied to $\mcC$. Here $u$ is one of the variables $\uaI$, which is chosen to ensure that for each $i$ such that $x^i\neq x$,
\beq\label{EDcom}
\Ex D_if_n^i =D_i\Ex f_n^i.
\eeq
This requires care, in view of the identity \eqref{id3}. However, it is achievable by using the variables $\uaI$ in the order given by a ranking that is discussed in Section \ref{rankvars}. This ranking is entirely determined by user-defined rankings of the variables $x^j$ and $u^\alpha$.

Taking \eqref{id1} into account leads to the identity
\beq\label{Eulp0}
\Ex(\mcC)=\sum_{x^i\neq x}D_i(\Ex(f_n^i)),
\eeq
which, with together with Theorem \ref{fundEul}, is the basis of the inversion method. The method works without modification provided that:
\begin{itemize}
	\item there are no poor scalings for any terms in $\mcC$;
	\item the fully-expanded form of $\mcC$ has no terms that are linear in $[\mbu]$.
\end{itemize}
We begin by restricting attention to divergences for which these conditions hold, so that
\beq\label{Eulp1}
\Ex(\mcC)=\sum_{x^i\neq x}D_i(\Ex(\pix f_n^i)),
\eeq
where every term in $\Ex(\pix f_n^i)$ depends on $[\mbu]$. The modifications needed if either condition does not hold are given in Sections \ref{poors} and \ref{lindiv}.

The iteration of the inversion process runs as follows. Calculate $\Ex(\mcC)$, which is a divergence $D_iP^i$ with no $D_x$ term, by \eqref{Eulp1}; it involves at most $p-1$ (but commonly, very few) nonzero functions $P^i$. Invert this divergence, treating $x$ as a parameter. \textit{If it is possible to invert in more than one way, always invert into the $P^i$ for which $x^i$ is ranked as low as possible}; the reason for this is given in the next paragraph.  If $\Ex(\mcC)$ has nonlinear terms that involve derivatives with respect to more than one $D_i$ (excluding $D_x$), this is accomplished by iterating the inversion process with as few independent variables as are needed. Otherwise, $P^i$ can be determined more quickly by using integration by parts (Procedure \ref{ibp}, with $x$ replaced by the appropriate $x^i$), and/or the method for linear terms (see Procedure \ref{lininv} in Section \ref{lindiv}). Note that this shortcut can be used whenever there are only two independent variables.

\textit{At this stage, check that the fully-expanded form of each $P^i$ has no terms that are ranked lower than $u$.} If any term is ranked lower than $u$, stop the calculation and try a different ranking of the variables $x^j$ and/or $u^\alpha$. This is essential because, to satisfy \eqref{EDcom} and avoid infinite loops, the variables $\uaI$ that are chosen to be $u$ in successive iterations must progress upwards through the ranking. Where there is a choice of inversion, the rank of each term in $P^i$ is maximized by using the $x^i$ of minimum order; this avoids unnecessary re-ranking.  

Having found and checked $P^i$, use \eqref{exp2} to obtain
\beq\label{invEul1}
\pix(f_n^i)=\left\{\lim_{\lambda\rightarrow 1}\int u\,\sigx\!\left(P^i\,;\lambda\right)\upd \lambda\right\} +D_xh^i,
\eeq
for arbitrary functions $h^i(\mbx,[\mbu])$. Apply Procedure \ref{ibp} to the function in braces and choose $h^i$ to make the right-hand side of \eqref{invEul1} equal the remainder from this procedure. This yields the representation of $\pix(f_n^i)$ that has the lowest-order derivatives (with respect to $x$) consistent with the inversion of $D_iP^i$; call this representation $f^i$. Commonly, such a lowest-order representation is needed to obtain a minimal inversion.

By Theorem \ref{fundEul}, there exists $\phi$ such that
\beq\label{Fdef}
\pix\left(\mcC-\sum_{x^i\neq x}D_if^i\right)=D_x\phi,
\eeq
because (by construction) the expression in parentheses belongs to $\mathrm{ker}(\Ex)$. 
Use the line integral formula \eqref{line} or Procedure \ref{ibp} to obtain $\phi$, then set $f^i:=\phi$ for $x^i=x$. Now update: set
\[
\mcC:=\mcC-D_if^i,\qquad F^i_{n+1}:=F^i_n+f^i.
\]

\subsection{Ranking and using the variables}\label{rankvars}

Having described a single iteration, we now turn to the question of how to choose $x$ and $u$ effectively. The starting-point is to construct a \textit{derivative-dominant} ranking of the variables $\uaJ$. This is a positive ranking that is determined by:
\begin{itemize}
	\item a ranking of the independent variables, $x^1\prec x^2\prec \cdots \prec x^p$;
	\item a ranking of the dependent variables, $u^1\prec u^2\prec \cdots \prec u^q$.
\end{itemize}
(Later in this section, we give a heuristic for ranking the dependent and independent variables effectively.)
The derivative-dominant ranking (denoted $\mbu_p$) is constructed iteratively, as follows.
\begin{align*}
\mbu_0&=u^1\prec\cdots\prec u^q,\\
\mbu_1&=\mbu_0\prec D_1\mbu_0 \prec D_1^2\mbu_0 \prec\cdots ,\\
\mbu_2&=\mbu_1\prec D_2\mbu_1 \prec D_2^2\mbu_1 \prec\cdots ,\\
\vdots&\qquad\vdots\\
\mbu_p&=\mbu_{p-1}\prec D_p\mbu_{p-1} \prec D_p^2\mbu_{p-1} \prec\cdots.
\end{align*}
In practice, very few $\uaJ$ are needed to carry out many inversions of interest, but it is essential that these are used in the order given by their ranking, subject to a constraint on $|\mbI|$ that is explained below. 

Given an independent variable, $x$, we call $\uaI$ \textit{relevant} if the updated $\mcC$ depends on $\uaIk$ for some $k\geq 0$. The first set of iterations uses $x=x^1$. For the initial iteration, $u$ is the lowest-ranked relevant $\ua$. In the following iteration, $u$ is the next-lowest-ranked relevant $u^\alpha$ and so on, up to and including $u^q$. (From \eqref{id3}, the condition \eqref{EDcom} holds whenever $u=\ua,\ \alpha = 1,\dots, q$.) After these iterations, the updated $\mcC$ is independent of $\mbu$ and its unmixed $x$-derivatives.

If the updated $\mcC$ has any remaining $x$-derivatives, these are mixed. Thus, as $\mcC$ has no linear terms, a necessary condition for the inversion to be minimal is that every $f_n^i$ is independent of $\mbu$ and its unmixed $x$-derivatives. Consequently, \eqref{EDcom} holds for $u=\uaI$ whenever $|\mbI|=1$, because
\[
\mbE^x_{\ua_{\mbI-\mbone_i}}(f_n^i)=0,\qquad x^i\neq x.
\]
Therefore, the process can be continued using each relevant $u=\uaI$ with $|\mbI|=1$ in the ranked order. Iterating, the same argument is used with $|\mbI|=2,3,\dots$, until $\mcC$ is independent of $x$-derivatives. Now set $x=x^2$ and iterate, treating $x^1$ as a parameter. In principle, this can be continued up to $x=x^p$; in practice, only a very few iterations are usually needed to complete the inversion. The best rankings invert many terms at each iteration. On the basis of some experience with conservation laws, the following heuristic for ranking the variables  $x^j$ and $\ua$ is recommended.
 
%The components of the divergence on the right-hand side of \eqref{Eulp1} come from
%the identity \eqref{Eulp}, which gives
%\beq\label{Eulp2}
%\Eax(\mcC)=-\sum_{I\neq 0}(-D)_I\EaIx(\mcC).
%\eeq

\begin{defn2} Apply criteria for ranking independent variables, using the following order of precedence.
	\begin{enumerate}
		\item Any independent variables that occur in the arguments of arbitrary functions of $\mbx$ should be ranked as high as possible, if they multiply terms that are nonlinear in $[\mbu]$. For instance, if nonlinear terms in a divergence depend on an arbitrary function, $f(t)$, set $x^p=t$.
		\item If independent variables occur explicitly in non-arbitrary functions, they should be ranked as high as possible (subject to 1 above), with priority going to variables with the most complicated functional dependence. For instance, if $\mcC$ is linear in $x^i$ and quadratic in $x^j$, then $x^i\prec x^j$ (so $i<j$ in our ordering).
		\item If an unmixed derivative of any $u^\alpha$ with respect to $x^i$ is the argument of a function other than a rational polynomial, rank $x^i$ as low as possible.
		\item Set $x^i\prec x^j$ if the highest-order \textit{unmixed} derivative (of any $u^\alpha$) with respect to $x^i$ is of higher order than the highest-order unmixed derivative with respect to $x^j$.
		\item Set $x^i\prec x^j$ if there are more occurrences of unmixed $x^i$-derivatives (in the fully-expanded divergence) than there are of unmixed $x^j$-derivatives.
		\item Apply criteria $3$, $4$, and $5$ in order of precedence, replacing unmixed by `minimally-mixed' derivatives. Minimally-mixed means that there are as few derivatives as possible with respect to any other variable(s).
	\end{enumerate}
The derivative indices $\mbJ$ in a derivative-dominant ranking are ordered according to the ranking of the independent variables. This can be used to rank the dependent variables; if there is more than one dependent variable in $\mcC$, use the following criteria in order.
	\begin{enumerate}
	\item Let $u^\alpha\prec u^\beta$ if $\mcC$ is linear in $[\ua]$ and nonlinear in $[u^\beta]$. 
	\item Let $u^\alpha\prec u^\beta$ if the lowest-ranked derivative of $u^\alpha$ that occurs in $\mcC$ is ranked lower than the lowest-ranked derivative of $u^\beta$ in $\mcC$. (In conservation laws, the lowest-ranked derivative of $u^\alpha$ is commonly the undifferentiated $\ua$, which corresponds to $\mbJ=0$.)
	\item Let $u^\alpha\prec u^\beta$ if the lowest-ranked derivative of $u^\alpha$ in the fully-expanded form of $\mcC$ occurs in more terms than the corresponding derivative of $u^\beta$ does.
	\end{enumerate}
These two sets of criteria are not exhaustive (allowing ties, which must be broken), but the aim that underlies them is to carry as few terms as possible into successive iterations of the procedure. Partial Euler operators with respect to unmixed derivatives are used in the earliest iterations; commonly, these are sufficient to complete the inversion.
\end{defn2}

\subsection{How to deal with poor scalings}\label{poors}

To remove an earlier restriction on the inversion process, we now address the problem of poor scalings, namely, that $\uaI$ and its $x$-derivatives (denoted $[\uaI]_x$) may occur in terms that belong to $\mcC-\piaIx(\mcC)$. Such terms (when fully expanded) are products of homogeneous rational polynomials in $[\uaI]_x$ of degree zero and logarithms of a single element of $[\uaI]_x$. We refer to these collectively as \textit{zero-degree} terms.

To overcome this difficulty, we modify the approach used by Anco \& Bluman \cite{AncBlu2} to treat singularities. In our context, $[\uaI]_x$ is replaced by $[\uaI+U^\alpha_\mbI]_x$, where $U^\alpha_\mbI$ is regarded as a new dependent variable that is ranked higher than $\uaI$. This approach works equally well for logarithms, ensuring that $\piaIx$ is a good scaling for all terms that depend on $[\uaI]_x$, so that its kernel consists only of terms that are independent of these variables. At the end of the calculation, all members of $[U^\alpha_\mbI]_x$ are set to zero. Note that there is no need to replace $\uaI$ in terms that are not zero-degree in $[\uaI]_x$, as total differentiation preserves the degree of homogeneity\footnote{An alternative approach (see Hickman \cite{Hickman}) uses a locally-invertible change of variables, $u^\alpha= \exp{v^\alpha}$, to change the degree of homogeneity from zero. This approach works equally well, but requires a little more processing time.}.

\bex
To illustrate the inversion process for divergences that have zero-degree terms, we complete Example \ref{kerpi} by inverting
\[
\mcC=D_xF+D_yG=u_{x}(2u+v_{y})-v_{x}(u_y+2v_{yy})+\frac{u_{x}}{u^2}+\frac{v_{yy}}{v_{y}}+\frac{2u_{y}}{u}\ln|u|;
\]
this has no linear terms. The ranking heuristic gives $y\prec x$ and $u\prec v$. The rest of the calculation goes as follows.

\begin{enumerate}
	\item $\mcC-\pi_u^y(\mcC)$ has just one zero-degree term (in $[u]_y$), namely $(2u_y/u)\ln|u|$. Replace this term by $(2(u_y+U_y)/(u+U))\ln|u+U|$.
	\item Calculate $\mbE_u^y(\mcC)=2u_x+v_{xy}-2u_xu^{-3}=D_x\{2u+v_y+u^{-2}\}=D_x\{\mbE_u^y(u^2+uv_y-u^{-1})\}$. No term in $2u+v_y+u^{-2}$ is ranked lower than $u$, as $u$ is the lowest-ranked variable.
	\item Then
	$\pi_u^y(\mcC-D_x\{u^2+uv_y-u^{-1}\})=D_y\{-uv_x+(\ln|u+U|)^2\}$.
	\item Now set $U=0$ to yield the remainder $\mcC_1=\mcC-D_x\{u^2+uv_y-u^{-1}\}-D_y\{-uv_x+(\ln|u|)^2\}$ at the close of the first iteration. This amounts to $\mcC_1=-2v_xv_{yy}+v_{yy}/v_y$.
	\item The second iteration starts with $\mcC_1-\pi_v^y(\mcC_1)=v_{yy}/v_y$; as this term is zero-degree in $[v]_y$, replace it by $(v_{yy}+V_{yy})/(v_y+V_y)$.
	\item Calculate $\mbE_v^y(\mcC_1)=-2v_{xyy}=D_x\{-2v_{yy}\}=D_x\{E_v^y(-vv_{yy})\}=D_x\{E_v^y(v_y^2)\}$. Note that $-2v_{yy}$ is not ranked lower than $v$.
	\item Then
	$\pi_v^y(\mcC_1-D_x\{v_y^2\})=D_y\{-2v_xv_y+\ln|v_y+V_y|\}$.
	\item Now set $V=0$ to yield the remainder $\mcC_2=\mcC_1-D_x\{v_y^2\}-D_y\{-2v_xv_y+\ln|v_y|\}=0$ at the close of the second iteration. The inversion process stops, having yielded the output
	\[
	F=u^2+uv_y-u^{-1}+v_y^2,\qquad G=-uv_x+(\ln|u|)^2-2v_xv_y+\ln|v_y|.
	\]
\end{enumerate}

Note that the homotopy formula \eqref{invDF} for inverting $D_xF$ can be adjusted in the same way, whenever $\sigaIx$ is a poor scaling for $\mbE^x_{\ua_{\mbI,j+1}}(D_xF)$.
\eex

\subsection{Linear divergences}\label{lindiv}

The inversion process runs into a difficulty when a given divergence has terms that are linear in $[\mbu]$, with mixed derivatives. Then it is possible to invert in more than one way, some of which may not produce a minimal result. To address this, it is helpful to invert using a different process. Suppose that $\mcC$ is a linear divergence (in fully-expanded form). Instead of using the derivative-dominant ranking, integrate by parts, working down the total order $|\mbJ|$ of the derivatives $\uaJ$. For a given total order, we will invert the mixed derivatives first, though this is not essential. 

Starting with the highest-order derivatives, one could integrate each term $f(\mbx)\uaJ$ in $\mcC$ by parts with respect to any $x^i$ such that $j^i\geq 1$, yielding the remainder $-D_i(f)\ua_{\mbJ-\mbone_i}$. If $D_\mbJ$ is a mixed derivative, we seek to choose $D_i$ in a way that keeps the result concise. Here are some simple criteria that are commonly effective, listed in order of precedence.
\begin{enumerate}
	\item $f$ is independent of $x^i$.
	\item $\mcC$ includes the term $D_i(f)\ua_{\mbJ-\mbone_i}$.
	\item $f$ is linear in $x^i$.
\end{enumerate}
These criteria can be used as an initial pass to invert $\mcC$ at least partially, leaving a remainder to be inverted that may have far fewer terms than $\mcC$ does.

Integrating the remainder by parts is straightforward if each $\uaJ$ is an unmixed derivative. If $\mbJ$ denotes a mixed derivative, integrate with respect to each $x^i$ such that $j^i\geq 1$ in turn, multiplying each result by a parameter (with the parameters summing to $1$). Iterate until either the remainder has a factor that is zero for some choice of parameters or there is no remainder. The final stage is to choose the parameters so as to minimize the number of terms in the final expression. Although this produces a minimal inversion, it comes at the cost of extra computational time spent doing all possible inversions followed by the parameter optimization.

\bex\label{linin}
Consider the linear divergence
\[
\mcC=\left(\tfrac{1}{6}f'(t)y^3+f(t)xy\right)\!(u_{xt}-u_{yy}),
\]
where $f$ is an arbitrary function. The first of the simple criteria above yields
\[
\mcC=D_x\!\left\{\tfrac{1}{6}f'y^3u_t\right\}+fxyu_{xt}-\left(\tfrac{1}{6}f'y^3+fxy\right)\!u_{yy};
\]
The second criterion is not helpful at this stage, but the third criterion gives
\[
\mcC=D_x\!\left\{\left(\tfrac{1}{6}f'y^3+fxy\right)\!u_t\right\}-fyu_{t}-\left(\tfrac{1}{6}f'y^3+fxy\right)\!u_{yy}.
\]
The remainder has no mixed derivatives; integrating it by parts produces the minimal inversion
\beq\label{KZlin}
\mcC=D_x\!\left\{\left(\tfrac{1}{6}f'y^3+fxy\right)\!u_t\right\}+D_y\!\left\{\left(\tfrac{1}{2}f'y^2+fx\right)\!u-\left(\tfrac{1}{6}f'y^3+fxy\right)\!u_y\right\}+D_t\!\left\{-fyu\right\}.
\eeq
%\begin{align*}
%\mcC&=D_x\!\left\{\left(\tfrac{1}{6}f'y^3+fxy\right)\!u_t\right\}+D_y\!\left\{-\left(\tfrac{1}{6}f'y^3+fxy\right)\!u_y\right\}-fyu_{t}+\left(\tfrac{1}{2}f'y^2+fx\right)\!u_{y}\\
%&=D_x\!\left\{\left(\tfrac{1}{6}f'y^3+fxy\right)\!u_t\right\}+D_y\!\left\{\left(\tfrac{1}{2}f'y^2+fx\right)\!u-\left(\tfrac{1}{6}f'y^3+fxy\right)\!u_y\right\}-fyu_{t}-f'yu\\
%&=D_x\!\left\{\left(\tfrac{1}{6}f'y^3+fxy\right)\!u_t\right\}+D_y\!\left\{\left(\tfrac{1}{2}f'y^2+fx\right)\!u-\left(\tfrac{1}{6}f'y^3+fxy\right)\!u_y\right\}+D_t\!\left\{-fyu\right\}.
%\end{align*}

\eex

\bex\label{linin2}
To illustrate the parametric approach, consider
\[
\mcC=\exp(t-x^2)(tu_{xtt}+2x(t+1)u_t).
\]
The simple criteria are irrelevant at present, so instead introduce parameters $\lambda_l$ and consider all possible inversions of the mixed derivative terms. Step-by-step, one obtains the following. 
\begin{align*}
\mcC&=D_x\!\left\{\lambda_1\exp(t-x^2)tu_{tt}\right\}+D_t\!\left\{(1-\lambda_1)\exp(t-x^2)tu_{xt}\right\}\\
&\quad+\exp(t-x^2)\{2\lambda_1xtu_{tt}-(1-\lambda_1)(t+1)u_{xt}+2x(t+1)u_t\}\\
&=D_x\!\left\{\exp(t-x^2)\{\lambda_1tu_{tt}-\lambda_2(1-\lambda_1)(t+1)u_{t}\}\right\}\\
&\quad+D_t\!\left\{\exp(t-x^2)\{(1-\lambda_1)tu_{xt}+2\lambda_1xtu_{t}-(1-\lambda_2)(1-\lambda_1)(t+1)u_{x}\}\right\}\\
&\quad+(1-\lambda_2)(1-\lambda_1)\exp(t-x^2)\{2x(t+1)u_{t}+(t+2)u_{x}\}.
\end{align*}
As the remainder has a factor $(1-\lambda_2)(1-\lambda_1)$, the inversion is complete if either parameter is set to $1$. The minimal (two-term) inversion has $\lambda_1=1$, which gives $\mcC=D_xF+D_tG$, where 
\[
F=t\exp(t-x^2)u_{tt}\,,\qquad G=2xt\exp(t-x^2)u_{t}\,.
\]
For $\lambda_1\neq 1, \lambda_2=1$, the inversion has three terms if $\lambda_1=0$, or five terms otherwise.

Note the importance of factorizing the remainder to stop the calculation once a possible parameter choice occurs. If we had continued the calculation without setting either $\lambda_i$ to $1$, it would have stopped at order zero, not one, giving an eleven-term inversion for general $\lambda_i$.
\eex

To summarize, one can invert a divergence that is linear in $[\mbu]$ by applying the following procedure.

\begin{rout}\label{lininv} \textbf{Inversion of a linear total divergence}  
		\begin{enumerate}
			\item[]
			\begin{enumerate}
				\item[\textsc{Step} 0.] Identify the maximum derivative order, $N=|\mbJ|$, of the variables $\uaJ$ that occur in $\mcC$. Set $F^i:=0,\ i=1\dots,p$.
				\item[\textsc{Step} 1.] While $\mcC$ has at least one term of order $N$, do the following. Select any such term, $f(\mbx)\uaJ$, and determine a variable $x^i$ over which to integrate. If desired, parametrize for mixed derivatives, as described above. Set $F^i:=F^i+f\ua_{\mbJ-\mbone_i}$ for the chosen $i$ (with the appropriate modification if parameters are used), and update the remainder by setting $\mcC:=\mcC-f(\mbx)\uaJ-D_i(f)\ua_{\mbJ-\mbone_i}$. Once $\mcC$ has no terms of order $N$, continue to \textsc{Step} 2.
				\item[\textsc{Step} 2.] If $\mcC$ is nonzero and cannot be set to zero by any choice of parameters, set $N:=N-1$ and return to \textsc{Step} 1; otherwise, set $\mcC$ to zero and carry out parameter optimization (if needed), give the output $F^i,\ i=1,\dots,p$, then \textbf{stop}.
			\end{enumerate}
		\end{enumerate}
\end{rout}

\subsection{Summary: a procedure for inverting Div}

Having addressed potential modifications, we are now in a position to summarize the inversion process for any total divergence $\mcC$ whose fully-expanded form has no terms depending on $\mbx$ only.

\begin{rout}\label{inversion} \textbf{Inversion of $\mcC=D_iF^i$} 
	\begin{enumerate}
		\item[] 
		\begin{enumerate}
			\item[]
			\begin{enumerate}
				\item[\textsc{Step} 0.] Let $\mcC_\ell$ be the linear part of $\mcC$. If $\mcC_\ell=0$, set $\mcC_0:=\mcC$ and $F_0^i:=0,\ i=1,\dots,p$. Otherwise, invert $\mcC_\ell$ using the technique described in Procedure \ref{lininv} above, to obtain
				functions $F_0^i$ that satisfy $\mcC_\ell=D_iF_0^i$. Set $\mcC_0:=\mcC-\mcC_\ell$. Choose a derivative-dominant ranking for $\mcC_0$ (either using the ranking heuristic or otherwise). In the notation used earlier, set $x:=x^1$ and $u$ to be the lowest-ranked relevant $\ua$; typically, $u:=u^1$. Set $n:=0$; here $n+1$ is the iteration number.
				\item[\textsc{Step} 1.] Calculate $\mcC_n-\pix(\mcC_n)$; if this includes terms depending on $[u]_x$, replace $[u]_x$ \textit{in these terms only} by $[u+U]_x$.
				\item[\textsc{Step} 2.] Apply the process detailed in Section \ref{iteration} (with $\mcC_n$ replacing $\mcC$). Provided that the ranking check is passed, this yields components $f^i$, which may depend on $[U]_x$. If the ranking check is failed, choose a different ranking of $x^j$ and $\ua$ for the remainder of the inversion process and return to \textsc{Step} 1, starting with the lowest-ranked $x$ and (relevant) $u$ and working upwards at each iteration. 
				\item[\textsc{Step} 3.] Replace all elements of $[U]_x$ in $f^i$ by zero.
				\item[\textsc{Step} 4.] Update: set $F^i_{n+1}:=F^i_n+f^i,\ \mcC_{n+1}:=\mcC_n-D_if^i$ and $n:=n+1$.
				If $\mcC_{n+1}= 0$, output $F^i=F^i_{n+1}$ and \textbf{stop}. Otherwise, update $u$ and $x$ as detailed in Section \ref{rankvars} and return to \textsc{Step} 1.
			\end{enumerate}
		\end{enumerate}
	\end{enumerate}
\end{rout}

\bex
As an example with linear terms, the Khokhlov--Zabolotskaya (KZ) equation,
\[
u_{xt}-uu_{xx}-u_x^2-u_{yy}=0,
\]
has a conservation law (see Poole \& Hereman \cite{Poole11}) that involves an arbitrary function, $f(t)$:
\[
\mcC=D_xF+D_yG+D_tH=\left(\tfrac{1}{6}f'y^3+fxy\right)\!\left(u_{xt}-uu_{xx}-u_x^2-u_{yy}\right).
\]
The linear part $\mcC_\ell$ of  $\mcC$ is inverted by Procedure \ref{lininv}, as shown in Example \ref{linin}. From \eqref{KZlin}, \textsc{Step} 0 in Procedure \ref{inversion} gives
\[
F_0=\left(\tfrac{1}{6}f'y^3+fxy\right)\!u_t\,,\qquad G_0=\left(\tfrac{1}{2}f'y^2+fx\right)\!u-\left(\tfrac{1}{6}f'y^3+fxy\right)\!u_y\,, \qquad H_0=-fyu.
\]
The remainder $\mcC-\mcC_\ell$ is
\[
\mcC_0=-\left(\tfrac{1}{6}f'y^3+fxy\right)\!\left(uu_{xx}+u_x^2\right).
\]
With the derivative-dominant ranking $x\prec y\prec t$, only one iteration is needed to complete the inversion, because $\Ex(\mcC_0)=0$. Then 
\[
\pi_u^x(\mcC_0)=D_x\!\left\{\tfrac{1}{2}fyu^2-\left(\tfrac{1}{6}f'y^3+fxy\right)\!uu_x\right\},
\]
and as $\mcC_0=\pi_u^x(\mcC_0)$, the calculation stops after updating, giving the output
\[
F=\left(\tfrac{1}{6}f'y^3\!+\!fxy\right)\!\left(u_t\!-\!uu_x\right)+\tfrac{1}{2}fyu^2,\qquad G=\left(\tfrac{1}{2}f'y^2\!+\!fx\right)\!u-\left(\tfrac{1}{6}f'y^3\!+\!fxy\right)\!u_y,\qquad H=-fyu.
\]
This corresponds to the minimal result in Poole \& Hereman \cite{Poole11}.
\eex

\bex
In every example so far, all iterations have used $u=\ua$, but not $\uaI$ for $|I|\neq 0$. The BBM equation from Example \ref{BBM} illustrates the way that the process continues unhindered once the current $\mcC$ is independent of all elements of $[\ua]_x$. The conservation law \eqref{BBMeq} is
\[
\mcC=D_xF+D_tG=(u^2\!+\!2u_{xt})(u_{t}\!-\!uu_x\!-\!u_{xxt});
\]
it has no linear part or zero-degree terms. The ranking heuristic gives $x\prec t$. The first iteration using $\Ex$ is similar to what we have seen so far, giving the following updates:
\[
\mcC_1=2u_{xt}(u_t-u_{xxt}),\qquad F_1=-\tfrac{1}{4}u^4-u^2u_{xt},\qquad G_1=\tfrac{1}{3}u^3.
\]
At this stage, $\mcC_1$ is independent of $[u]_x$ and equals $\pi^x_{u_t}(\mcC_1)$. In the second iteration, the ranking requires us to apply $\mbE^x_{u_t}$. This annihilates $\mcC_1$; consequently, $\mcC_1$ is a total derivative with respect to $x$. Inverting this gives the final (minimal) result,
\[
F=-\tfrac{1}{4}u^4-u^2u_{xt}-u_{xt}^2+u_t^2,\qquad G=\tfrac{1}{3}u^3,
\]
which was obtained by inspection in Example \ref{BBM}.
\eex

The Appendix lists inversions of various other divergences; in each case, the inversion produced by Procedure \ref{inversion} and the ranking heuristic is minimal and takes very few iterations to complete.

\subsection{Splitting a divergence using discrete symmetries}

A given divergence may have discrete symmetries between various terms in its fully-expanded form. If the divergence has very many terms that are connected by a particular discrete symmetry group, it can be worth splitting these into disjoint divergences that are mapped to one another by the group elements. Then it is only necessary to invert one of these divergences, using the symmetries to create inversions of the others without the need for much extra computation. However, to use Procedure \ref{inversion}, it is necessary to check that all split terms are grouped into divergences; this check is done by using the criterion \eqref{divcrit}.

Polynomial divergences can first be split by degree, yielding divergences that are homogeneous in $[\mbu]$. Such splitting does not add significantly to the computation time, nor does it need to be checked using \eqref{divcrit}, which holds automatically. Splitting by degree can make it easy to identify terms that are linked by discrete symmetries, as illustrated by the following example.

\bex

In Cartesian coordinates $(x,y)$, the steady non-dimensionalized von K\'arm\'an equations for a plate subject to a prescribed axisymmetric load function, $p(x^2+y^2)$, are $\mcA_\ell=0$, $\ell=1,2$, where
\begin{align*}
	\mcA_1&=u_{xxxx}+2u_{xxyy}+u_{yyyy}-u_{xx}v_{yy}+2u_{xy}v_{xy}-u_{yy}v_{xx}-p,\\
		\mcA_2&=v_{xxxx}+2v_{xxyy}+v_{yyyy}+u_{xx}u_{yy}-u_{xy}^2\,.
\end{align*} 
Here $u$ is the displacement and $v$ is the Airy stress. This is a system of Euler--Lagrange equations. By Noether's Theorem, the one-parameter Lie group of rotational symmetries yields the following conservation law (see Djondjorov and Vassilev \cite{DV00}):
\[
\mcC=(yv_x-xv_y)\mcA_2-(yu_x-xu_y)\mcA_1\,.
\]
This conservation law has linear, quadratic and cubic terms, so $\mcC$ can be inverted by summing the inversions of each of the following divergences:
\begin{align*}
	\mcC_\ell&=(yu_x-xu_y)p,\\
	\mcC_q&=(yv_x-xv_y)(
	v_{xxxx}+2v_{xxyy}+v_{yyyy})-(yu_x-xu_y)(u_{xxxx}+2u_{xxyy}+u_{yyyy}),\\
	\mcC_c&=(yv_x-xv_y)(u_{xx}u_{yy}-u_{xy}^2)+(yu_x-xu_y)(u_{xx}v_{yy}-2u_{xy}v_{xy}+u_{yy}v_{xx}).
\end{align*}
The quadratic terms have an obvious discrete symmetry, $\Gamma_1:(x,y,u,v)\mapsto(-x,y,v,u)$, which gives a splitting into two parts, each of which is a divergence:
\[
\mcC_q=\overline{\mcC}_q+\Gamma_1(\overline{\mcC}_q),
\]
where
\[
\overline{\mcC}_q=(yv_x-xv_y)(v_{xxxx}+2v_{xxyy}+v_{yyyy}).
\]
Consequently, we can invert $\mcC_q$ by inverting $\overline{\mcC}_q$ and applying the symmetry $\Gamma_1$.

Note that $\overline{\mcC}_q$ has a discrete symmetry, $\Gamma_2:(x,y,v)\mapsto(y,-x,v)$, which gives
\[
\overline{\mcC}_q=g+\Gamma_2(g),\quad\text{where}\quad g=(yv_x-xv_y)(v_{xxxx}+v_{xxyy}).
\]
Checking \eqref{divcrit} shows that this is not a valid splitting into divergences, because
\[
\mbE_v(g)=-2v_{xxxy}-2v_{xyyy}\neq 0.
\]

If necessary, split divergences can be inverted using different rankings. However, in this simple example, a single ranking works for all (nonlinear) parts. One tie-break is needed: let $x\prec y$. The ranking heuristic gives $v\prec u$ for the cubic terms; the variables $[u]$ are not relevant in the inversion of $\overline{\mcC}_q$. Procedure \ref{inversion} gives the following minimal inversions,
\begin{align*}
\mcC_\ell&=D_x(yup)+D_y(-xup),\\
\overline{\mcC}_q&=D_x\{F_q(x,y,[v])\}+D_y\{G_q(x,y,[v])\},\\
\mcC_c&=D_xF_c+D_yG_c\,,
\end{align*}
where
\begin{align*}
	F_q(x,y,[v])=&\ y\left\{v_xv_{xxx}-\tfrac{1}{2}v_{xx}^2+2v_xv_{xyy}+v_{xy}^2+vv_{yyyy}-\tfrac{1}{2}v_{yy}^2\right\}\\
	&+x\left\{-v_yv_{xxx}+v_{xx}v_{xy}-2v_yv_{xyy}\right\}+v_yv_{xx}\,,\\
	G_q(x,y,[v])=&\ y\left\{-2v_{xx}v_{xy}-vv_{xyyy}+v_yv_{xyy}\right\}+x\left\{-\tfrac{1}{2}v_{xx}^2+v_{xy}^2-v_yv_{yyy}+\tfrac{1}{2}v_{yy}^2\right\}+vv_{xyy}\,,\\
	F_c=&\
	y\left(u_xu_{yy}v_x+\tfrac{1}{2}u_x^2v_{yy}\right)+x\left(u_yu_{xy}v_y+\tfrac{1}{2}u_y^2v_{xy}\right)-\tfrac{1}{2}u_y^2v_y\,,\\
	G_c=&\
	-y\left(u_xu_{xy}v_x+\tfrac{1}{2}u_x^2v_{xy}\right)-x\left(u_yu_{xx}v_y+\tfrac{1}{2}u_y^2v_{xx}\right)+\tfrac{1}{2}u_x^2v_x\,.
\end{align*}
Applying $\Gamma_1$ to $\overline{\mcC}_q$ gives the following minimal inversion for $\mcC$:
\[
\mcC=D_x\!\left\{yup +F_q(x,y,[v])-F_q(x,y,[u]) +F_c\right\}+D_y\!\left\{-xup +G_q(x,y,[v])-G_q(x,y,[u]) +G_c\right\}.
\]

Note that the tie-break $x\prec y$ causes the inversion to break the symmetry $\Gamma_2$ that is apparent in $\overline{\mcC}_q$. As it turns out, this symmetry can be restored without increasing the overall number of terms, by adding components of a trivial conservation law (which are of the form \eqref{trans}). It is an open question whether such preservation of symmetry and minimality is achievable in general.

In Djondjorov \& Vassilev \cite{DV00}, the inversion of $\mcC$ for the special case $p=0$ has 62 terms, which are grouped according to their physical meaning. By contrast, the minimal inversion above has just 46 (resp. 48) terms when $p$ is zero (resp. nonzero), a considerable saving. Moreover, by exploiting the symmetry $\Gamma_1$, only 28 (resp. 30) of these terms are determined using Procedure \ref{inversion}. However, in seeking an efficient inversion, we have ignored the physics. It would be interesting to understand the extent to which the use of a minimal inversion obscures the underlying physics.

\eex

\section{Concluding remarks}

Partial Euler operators and partial scalings make it possible to invert divergences with respect to one independent variable at a time, a by-product being that some contributions to other components are determined at each iteration step. Although each iteration involves a fair amount of computation, very few iterations are needed for many systems of interest.

Given the potential complexity of functions, it is unlikely that every divergence can be inverted, even in principle, by Procedure \ref{inversion}. The question of how to prove or disprove this is open. In practice, products of mixed derivatives present the greatest challenge to concise inversion, although the option of re-ranking part-way through the procedure enables a divide-and-conquer approach to be taken.

The focus of this work has been on inverting the total divergence operator Div. However, this immediately applies to expressions that can be recast as total divergences. For instance, for $p=3$, the total curl $\mbF=\mathrm{Curl}(\mathbf{G})$ can be inverted by writing
\beq\label{Curlinv}
F^i=\mathrm{Div}(H^{ij}\mathbf{e}_j)=D_jH^{ij},\qquad H^{ij}:=\epsilon^{ijk}G_k=-H^{ji},
\eeq
where $\epsilon^{ijk}$ is the Levi--Civita symbol, then inverting one component at a time and using the results at each stage to simplify the remainder of the calculation. Once $H^{ij}$ is known, the identity $G_l=\frac{1}{2}\epsilon_{ijl}H^{ij}$ recovers $\mathbf{G}$. Typically, a minimal inversion is achieved by using a different ranking for each $F^i$, in accordance with the ranking heuristic. Here is a simple illustration of the general approach.

\bex
In Cartesian coordinates, invert $(F^x,F^y,F^z)=\mathrm{Curl}(G_x,G_y,G_z)$, where 
\begin{align*}
F^x&=u_x(u_{yy}-u_{zz})+u_yu_{xy}-u_zu_{xz}\,,\\ F^y&=u_y(u_{zz}-u_{xx})+u_zu_{yz}-u_xu_{xy}\,,\\
F^z&=u_z(u_{xx}-u_{yy})+u_xu_{xz}-u_yu_{yz}\,. 
\end{align*}
Begin by using Procedure \ref{inversion} with $y\prec z\prec x$. In two iterations, this gives
\[
F^x=D_y(u_xu_y)+D_z(-u_xu_z).
\]
With $(x,y,z)$ replacing the indices $(1,2,3)$ in \eqref{Curlinv}, let
\[
H^{xy}=u_xu_y=-H^{yx},\qquad H^{xz}=u_xu_z=-H^{zx}.
\]
Therefore
\[
D_zH^{yz}=F^{y}-D_xH^{yx}=u_yu_{zz}+u_zu_{yz}\,.
\]
One could invert this by a further iteration with the ranking $z\prec y$, though it is inverted more quickly by the line integral formula, which gives $H^{yz}=u_yu_z$. Finally,
\[
G_x=u_yu_z,\qquad G_y=u_xu_z,\qquad G_z=u_xu_y.
\]
\eex

\section*{Acknowledgments} 

I am grateful to Willy Hereman for discussions on homotopy operators and many helpful comments on a draft of this paper, and to the Centre International de Rencontres Math\'{e}matiques in Luminy for support and hospitality during the conference \textit{Symmetry and Computation}. I would like to thank the Isaac Newton Institute for Mathematical Sciences for support and hospitality during the programme \textit{Geometry, Compatibility and Structure Preservation in Computational Differential Equations}; this provided the opportunity to develop much of the theory described above. I also thank the referees for their constructive comments.

\appendix
\section{Appendix}

Procedure \ref{inversion} has been tested on the following conservation laws. In every case, the procedure coupled with the ranking heuristic yields a minimal inversion.

\subsubsection*{Kadomtsev--Petviashvili (KP) equation}

There are two forms of the KP equation, depending on which of $\epsilon=\pm 1$ is chosen. In either case, it has a conservation law,
\begin{align*}
	\mcC&=f(t)y\left(u_{xt}+uu_{xx}+u_x^2+u_{4x}+\epsilon u_{yy}\right)\\
	&=D_x\!\left\{fy\left(u_t+uu_x+u_{xxx}\right)\right\}+D_t\!\left\{\epsilon f(yu_u-u)\right\}.
\end{align*}
With the ranking $x\prec y\prec t$, the procedure requires three iterations to obtain this inversion.

\subsubsection*{Potential Burgers equation (see Wolf \textit{et al.} \cite{WolfBM})}

The potential Burgers equation has a conservation law for each $f(x,t)$ such that $f_t+f_{xx}=0$:
\[
\mcC=f\exp(u/2)(u_t-u_{xx}-\tfrac{1}{2}u^2)=D_x\{\exp(u/2)(2f_x-fu_x)\}+D_t\{2\exp(u/2)f\}.
\] 
The inversion requires one iteration with $x\prec t$, exchanging $f_t$ and $-f_{xx}$ twice.

\subsubsection*{Zakharov--Kuznetzov equation (see Poole \& Hereman \cite{Poole11})}

The Zakharov--Kuznetzov equation is $\mcA=0$, where
\[
\mcA=u_t+uu_x+u_{xxx}+u_{xyy}\,.
\]
It has a conservation law
\begin{align*}
\mcC&=(u^2+2(u_{xx}+u_{yy}))\mcA\\
&=D_x\{(\tfrac{1}{2}u^2+u_{xx}+u_{yy})^2+2u_xu_t\}+D_y\{2u_yu_t\}+D_t\{\tfrac{1}{3}u^3-u_x^2-u_y^2\}.
\end{align*}
The fully-expanded form has just eleven terms and the inversion requires two iterations.
 
\subsubsection*{Short-pulse equation (see Brunelli \cite{Brunelli})}

The short-pulse equation, $\mcA=0$, with
\[
\mcA=u_{xt}-u-uu_x^2-\tfrac{1}{2}u^2u_{xx}\,,
\]
has the following conservation law involving a square root:
\begin{align*}
\mcC&=u_x(1+u_x^2)^{-1/2}\mcA\\
&=D_x\{-\tfrac{1}{2}u^2(1+u_x^2)^{1/2}\}+D_t\{(1+u_x^2)^{1/2}\}.
\end{align*}
With $x\prec t$, this can be inverted in one iteration.

\subsubsection*{Nonlinear Schr\"{o}dinger equation}
Splitting the field into its real and imaginary parts gives the system $\mcA_1=0,\ \mcA_2=0$, with
\[
\mcA_1=-v_t+u_{xx}+(u^2+v^2)u,\qquad\mcA_2=u_t+v_{xx}+(u^2+v^2)v.
\]
One of the conservation laws is
\begin{align*}
	\mcC&=u_t\mcA_1+v_t\mcA_2\\
	&=D_x\!\left\{2u_xu_t+2v_xv_t\right\}+D_t\!\left\{\tfrac{1}{2}\left(u^2+v^2\right)-u_x^2-v_x^2\right\}.
\end{align*}
With the ranking $u\prec v$ and $x\prec t$, the procedure requires two iterations.

\subsubsection*{It\^{o} equations (see Wolf \cite{Wolf})}
The equations are $\mcA_1=0,\ \mcA_2=0$, with
\[
\mcA_1=u_t-u_{xxx}-6uu_x-2vv_x\,,\qquad\mcA_2=v_t-2u_xv-2uv_x\,.
\]
This system has a rational conservation law,
\begin{align*}
\mcC&=2v^{-1}\mcA_1+v^{-4}\left(vv_{xx}-\tfrac{3}{2}v_x^2-2uv^2\right)\mcA_2\\
&=D_x\!\left\{v^{-3}\left(v_xv_t-2u_{xx}v^2-2u_xvv_x-uv_x^2-4u^2v^2-4v^4\right)\right\}+D_t\!\left\{v^{-3}\left(2uv^2-\tfrac{1}{2}v_x^2\right)\right\}.
\end{align*}
With the ranking $u\prec v$ and $x\prec t$, the procedure requires two iterations.

\subsubsection*{Navier--Stokes equations}
The (constant-density) two-dimensional Navier--Stokes equations are $\mcA_\ell=0,\ \ell=1,2,3$, where
\[
\mcA_1=u_t+uu_x+vu_y+p_x-\nu(u_{xx}+u_{yy}),\quad \mcA_2=v_t+uv_x+vv_y+p_y-\nu(v_{xx}+v_{yy}),\quad \mcA_3=u_x+v_y\,.
\]
This system has a family of conservation laws involving two arbitrary functions, $f(t)$ and $g(t)$, namely
\begin{align*}
	\mcC&=f\mcA_1+g\mcA_2+(fu+gv-f'x-g'y)\mcA_3\\
	&=D_x\{fu^2+guv+fp-\nu(fu_x+gv_x)-(f'x+g'y)u\}\\
	&\quad+D_y\{fuv+gv^2+gp-\nu(fu_y+gv_y)-(f'x+g'y)v\}+D_t\{fu+gv\}.
\end{align*}
Procedure C consists of a linear inversion and two further iterations, using the ranking $x\prec y\prec t$. The three-dimensional Navier--Stokes equations have a similar conservation law, which requires a linear inversion and three further iterations.

\bigskip
Procedure C and the ranking heuristic have also been tested on some divergences not arising from conservation laws, with high order or complexity. Again, the output in each case is a minimal inversion.

\subsubsection*{High-order derivatives}
The divergence is
\begin{align*}
	\mcC&=u_{5x}u_{4y}+u_{2x,2y}u_{2x,3y}\\
	&=D_x\!\left\{u_{4x}u_{4y}-u_{3x}u_{x,4y}+(u_{2x,2y})^2 \right\}+D_t\!\left\{u_{3x}u_{2x,3y}-u_{3x,y}u_{2x,2y} \right\}.
\end{align*}
With the ranking $x\prec y$, the procedure requires three iterations. The other ranking, $y\prec x$, also produces a minimal inversion after three iterations; it is higher-order in $x$ but lower-order in $y$:
\[
\mcC=D_x\!\left\{u_{4x,2y}u_{2y}-u_{3x,2y}u_{x,2y}+(u_{2x,2y})^2 \right\}+D_t\!\left\{u_{5x}u_{3y}-u_{5x,y}u_{2y} \right\}.
\]

\subsubsection*{High-order derivatives and explicit dependence}

The divergence is
\begin{align*}
	\mcC&=t(u_yu_{xttt}-u_xu_{yttt})\\
	&=D_x\!\left\{-tuu_{yttt} \right\}+D_y\!\left\{tuu_{xttt}) \right\}.
\end{align*}
With the ranking $x\prec y\prec t$, the procedure requires one iteration. This illustrates the value of the second criterion for ranking independent variables; if $t$ is ranked lower than $x$ and $y$, the procedure fails at the first check.

\subsubsection*{Exponential dependence}
The divergence is
\begin{align*}
	\mcC&=\left(u_{xx}u_y^2-2u_{yy}\right)\exp(u_x)\\
	&=D_x\!\left\{u_y^2 \exp(u_x)\right\}+D_t\!\left\{-2u_y \exp(u_x)\right\}.
\end{align*}
With the ranking $x\prec y$, the procedure requires one iteration.

\subsubsection*{Trigonometric dependence}
The divergence is
\begin{align*}
	\mcC&=\left(u_{xx}u_{yyy}-u_{xxy}u_{xyy}\right)\cos(u_x)\\
	&=D_x\!\left\{u_{yyy} \sin(u_x)\right\}+D_t\!\left\{-u_{xyy} \sin(u_x)\right\}.
\end{align*}
With the ranking $x\prec y$, the procedure requires one iteration.

\end{document}